\newcommand*{\CTIV}{Vovk:arXiv0904}
\newcommand*{\CTXIII}{Vovk/Shafer:arXiv1703}
\newcommand{\Extra}[1]{}
\newcommand*{\st}{\mathrel{|}}
\newcommand*{\dd}{\,\mathrm{d}}
\renewcommand*{\S}{\mathbf{S}}
\newcommand*{\D}{\mathbf{D}}
\DeclareMathOperator{\III}{\boldsymbol{1}}
\DeclareMathOperator{\dom}{dom}
\DeclareMathOperator{\EEE}{\mathcal{E}}
\DeclareMathOperator{\LLL}{\mathcal{L}}
\newcommand*{\R}{\mathbb{R}}
\theoremstyle{plain}
\newtheorem{theorem}{Theorem}
\newtheorem{proposition}[theorem]{Proposition}
\newtheorem{corollary}[theorem]{Corollary}
\theoremstyle{definition}
  \title{Non-stochastic portfolio theory}
  \author{Vladimir Vovk}
\begin{document}
\maketitle
\begin{abstract}
  This paper studies a non-stochastic version of Fernholz's stochastic portfolio theory
  for a simple model of stock markets with continuous price paths.
  It establishes non-stochastic versions of the most basic results of stochastic portfolio theory
  and discusses connections with Stroock--Varadhan martingales.

    \bigskip

    \noindent
    The version at
    \href{http://probabilityandfinance.com/articles/index.html#51}{http://probabilityandfinance.com}
    (Working Paper 51)
    is updated most often.
\end{abstract}

\section{Introduction}

Fernholz's stochastic portfolio theory \cite{Fernholz:1999,Fernholz:2002,Fernholz/Karatzas:2009},
as its name suggests, depends on a stochastic model of stock prices.
This paper proposes a non-stochastic version of this theory based on the framework of \cite{\CTXIII}
(see the end of this section for a brief discussion of its relation to~\cite{Schied/etal:2016}).

A key finding
(see, e.g., \cite[Section~4]{Fernholz:1999}, \cite[Chapters~2 and~3]{Fernholz:2002}, \cite[Section~7]{Fernholz/Karatzas:2009})
of stochastic portfolio theory is that, under certain simplifying assumptions,
there is a long-only portfolio that outperforms the capital-weighted market portfolio.
The principal aim of this paper is to give a simple non-stochastic formalization of this phenomenon.

Section~\ref{sec:market} defines our model of a stock market
and introduces non-stochastic notions of a portfolio's value and its excess growth component.
Section~\ref{sec:master} is devoted to a non-stochastic version of the ``master equation'' of stochastic portfolio theory,
and Section~\ref{sec:special} to its applications.
In particular, the latter covers the entropy-weighted portfolio
(as in \cite[Theorem~4.1]{Fernholz:1999} and \cite[Theorem~2.3.4]{Fernholz:2002})
and diversity-weighted portfolios
(\cite[Example~3.4.4]{Fernholz:2002}, \cite[Section~7]{Fernholz/Karatzas:2009}, going back to at least \cite{Fernholz:1999ETF}).
Section~\ref{sec:interpretation} is devoted to detailed interpretations and discussions of the results of the previous sections.
Section~\ref{sec:additive} discusses connections with Stroock--Varadhan martingales,
which make the master equation very intuitive.
Finally, Section~\ref{sec:conclusion} lists some directions of further research.

Another paper treating stochastic portfolio theory in a pathwise manner is \cite{Schied/etal:2016},
and it considers a wider class of portfolios.
However, that paper relies on some assumptions that are not justified by economic considerations:
\begin{itemize}
\item
  it postulates a suitable ``refining sequence of partitions'';
\item
  it postulates the existence of a continuous covariation between each pair of price paths
  w.r.\ to this refining sequence of partitions (in F\"ollmer's \cite{Follmer:1981} sense);
\item
  a possible extension to non-smooth portfolio generating functions
  (as in \cite[Chapter~4]{Fernholz:2002}) would require postulating the existence of local times
  (perhaps along the lines of \cite{Wuermli:1980}).
\end{itemize}

\section{Market and portfolios}
\label{sec:market}

This paper uses the definitions and notation of \cite{\CTXIII} and \cite{Fernholz:2002}
(the latter, however, will always be repeated).
The notation $\int X \dd Y$ is used for the process whose value at time $t$
is $\int_0^t X(s) \dd Y(s)$, both for It\^o and Lebesgue--Stieltjes integration.
The brackets $[\ldots]$ always signify quadratic variation and are never used in the role of parentheses.
The abbreviations ``q.a.''\ and ``ucqa'' stand for ``quasi always'' and ``uniformly on compacts quasi always'';
see \cite{\CTXIII} for definitions.

We consider a financial market in which $J$ idealized securities, referred to as stocks, are traded;
their price paths $S_j:[0,\infty)\to(0,\infty)$, $j=1,\ldots,J$, are assumed to be continuous functions,
and they never pay dividends.
We let $C[0,\infty)$ stand for the set of all continuous real-valued functions on $[0,\infty)$.
As in \cite[Section~4]{\CTXIII}, we fix a sufficiently rich language for defining sequences of partitions;
all notions of non-stochastic It\^o calculus used in this paper
(such as It\^o integral and Dol\'eans exponential and logarithm)
are relative to this language.

For convenience, we identify $S_j(t)$ with the total market capitalization of the $j$th stock at time $t\in[0,\infty)$.
The \emph{total capitalization of the market} is defined as the process
\[
  S(t) := \sum_{j=1}^J S_j(t),
  \quad
  t\in[0,\infty),
\]
and the \emph{market weight} of the $j$th stock is
\[
  \mu_j(t) := S_j(t)/S(t),
  \quad
  j=1,\ldots,J.
\]
We take the total capitalization of the market as our num\'eraire,
which allows us to regard $\mu_1,\ldots,\mu_J,1$ as the traded securities
(cf.\ \cite[Section~9]{\CTXIII}),
the first $J$ of them being just like our original securities $S_j$ but constrained by $\mu_1+\cdots+\mu_J=1$.
(In fact, the original securities $S_j$ will never be used explicitly in the rest of this paper
apart from an informal remark.)

Let $\Delta^J$ be the interior of the standard simplex in $\R^J$,
\[
  \Delta^J
  :=
  \left\{
    x = (x_1,\ldots,x_J) \in (0,1)^J
    \st
    x_1 + \cdots + x_J = 1
  \right\}.
\]
A \emph{basic portfolio} is a continuous bounded function $\pi:\Delta^J\to\overline{\Delta^J}$
mapping $\Delta^J$ to its closure in $\R^J$;
intuitively, it maps the current market weights $\mu=(\mu_1,\ldots,\mu_J)$
to the fractions $\pi(\mu)=(\pi_1(\mu),\ldots,\pi_J(\mu))$ of the current capital invested in the $J$ stocks.
(In this paper we will only need these very primitive Markovian portfolios.)

The non-stochastic notions of Dol\'eans exponential $\EEE$ and Dol\'eans logarithm $\LLL$ used in this paper
are defined in \cite{\CTXIII}.
The most useful for us interpretation of Dol\'eans logarithm
is that $\LLL(Y)$ is the cumulative return of a price path $Y\in C[0,\infty)$,
and Dol\'eans exponential restores the price path from its cumulative return.
The \emph{value process} of $\pi$ is the Dol\'eans exponential
\begin{align}
  Z_{\pi}
  &:=
  \EEE
  \Bigl(
    \int\pi(\mu)\dd\LLL(\mu)
  \Bigr)
  \notag\\
  &:=
  \EEE
  \left(
    \sum_{j=1}^J
    \int\pi_j(\mu)\dd\LLL(\mu_j)
  \right)
  =
  \EEE
  \left(
    \sum_{j=1}^J
    \int
    \frac{\pi_j(\mu)}{\mu_j}
    \dd
    \mu_j
  \right),
  \label{eq:value}
\end{align}
where $\mu:[0,\infty)\to\R^J$ is defined by $\mu(t):=(\mu_1(t),\ldots,\mu_J(t))$,
$\pi_j(\mu):[0,\infty)\to\R$ is defined by $\pi_j(\mu)(t):=\pi_j(\mu(t))$,
and $\pi(\mu):[0,\infty)\to\R^J$ is defined by $\pi(\mu)(t):=(\pi_1(\mu)(t),\ldots,\pi_J(\mu)(t))$.
The value process $Z_{\pi}$ is defined and continuous quasi always.

The definition \eqref{eq:value} involves Dol\'eans logarithm,
but stochastic portfolio theory emphasizes regular logarithm
(cf.\ the logarithmic model in \cite[Section~1.1]{Fernholz:2002}).
On the log scale the definition~\eqref{eq:value} can be rewritten as
\begin{align}
  \ln
  Z_{\pi}
  &=
  \ln
  \EEE
  \left(
    \sum_{j=1}^J
    \int\pi_j(\mu)\dd\LLL(\mu_j)
  \right)\label{eq:chain-first}\\
  &=
  \sum_{j=1}^J
  \int\pi_j(\mu)\dd\LLL(\mu_j)
  -
  \frac12
  \left[
    \sum_{j=1}^J
    \int\pi_j(\mu)\dd\LLL(\mu_j)
  \right]\label{eq:chain-second}\\
  &=
  \sum_{j=1}^J
  \int\pi_j(\mu)\dd\ln \mu_j
  +
  \frac12
  \sum_{j=1}^J
  \int\pi_j(\mu)\dd[\ln\mu_j]\label{eq:third}\\
  &\qquad{}-
  \frac12
  \left[
    \sum_{j=1}^J
    \int\pi_j(\mu)\dd\ln\mu_j
  \right]
  \quad\text{q.a.}.
  \label{eq:chain-last}
\end{align}
The second equality in the chain~\eqref{eq:chain-first}--\eqref{eq:chain-last} follows from the standard equality
\begin{equation}\label{eq:standard-1}
  \EEE(X)
  =
  \exp(X-[X]/2)
  \quad
  \text{q.a.}
\end{equation}
and the third equality in~\eqref{eq:chain-first}--\eqref{eq:chain-last} follows from
\begin{equation}\label{eq:standard-2-used-in-chain}
  \LLL(Y)
  =
  \ln Y_t + \frac12 [\ln Y]
  \quad
  \text{q.a.}
\end{equation}
(showing that the first term in~\eqref{eq:chain-second} can be represented as \eqref{eq:third})
and a slight generalization of
\begin{equation}\label{eq:standard-3}
  [\LLL(Y)]=[\ln Y]
  \quad
  \text{q.a.}
\end{equation}
(showing that the second term in~\eqref{eq:chain-second} can be rewritten as \eqref{eq:chain-last}).
See \cite[Section~7]{\CTXIII} for~\eqref{eq:standard-1}--\eqref{eq:standard-3}.

The part
\begin{align}
  \Gamma^*_{\pi}
  &=
  \frac12
  \sum_{j=1}^J
  \int\pi_j(\mu)\dd[\ln\mu_j]
  -
  \frac12
  \left[
    \sum_{j=1}^J
    \int\pi_j(\mu)\dd\ln\mu_j
  \right]
  \label{eq:excess}\\
  &=
  \frac12
  \sum_{j=1}^J
  \int\pi_j(\mu)\dd[\ln\mu_j]
  -
  \frac12
  \sum_{i,j=1}^J
  \int
  \pi_i(\mu)\pi_j(\mu)
  \dd[\ln\mu_i,\ln\mu_j]
  \notag
\end{align}
of \eqref{eq:chain-first}--\eqref{eq:chain-last} consisting of the last two addends
will be called the \emph{excess growth term}
(it corresponds to the cumulative excess growth rate in stochastic portfolio theory).
We can use it to summarize~\eqref{eq:chain-first}--\eqref{eq:chain-last} as
\begin{equation}\label{eq:summary}
  \ln Z_{\pi}
  =
  \sum_{j=1}^J
  \int\pi_j(\mu)\dd\ln\mu_j
  +
  \Gamma^*_{\pi}
  \quad\text{q.a.}
\end{equation}
The addend $\sum_{j=1}^J\int\pi_j(\mu)\dd\ln\mu_j$ is the naive expression for the cumulative log growth in the value of $\pi$,
and $\Gamma^*_{\pi}$ is the adjustment required to obtain the true cumulative log growth.

A particularly important special case is that of the market portfolio, $\pi=\mu$.
To understand the intuition behind the excess growth term~\eqref{eq:excess} in this case,
we can rewrite $2\Gamma^*_{\mu}$ as
\begin{align}
  2\Gamma^*_{\mu}(t)
  &=
  \sum_{j=1}^J
  \int_0^t
  \mu_j(s)
  \dd[\ln\mu_j](s)
  -
  \left[
    \sum_{j=1}^J
    \int
    \mu_j
    \dd
    \ln\mu_j
  \right](t)
  \label{eq:Gamma-mu-1}\\
  &=
  \sum_{j=1}^J
  \int_0^t
  \mu_j(s)
  \dd[\ln\mu_j](s)
  =
  \sum_{j=1}^J
  \int_0^t
  \frac{\dd[\mu_j](s)}{\mu_j(s)}
  \label{eq:Gamma-mu-2}\\
  &\ge
  \sum_{j=1}^J
  \int_0^t
  \dd[\mu_j](s)
  =
  \sum_{j=1}^J
  [\mu_j](t),
  \notag
\end{align}
where we have used the fact that the subtrahend in \eqref{eq:Gamma-mu-1},
being the quadratic variation of a monotonic function
(remember that $\sum_j\mu_j=1$),
is zero.
We can see that $2\Gamma^*_{\mu}(t)$ is bounded below
by the total quadratic variation of the market weights.

\section{Master equation}
\label{sec:master}

Let $\S$ be a $C^2$ positive function defined on an open neighbourhood $\dom\S$ of $\Delta^J$ in $\R^J$.
For any $C^2$ function $F$ (such as $\ln\S$) defined on $\dom\S$
we let $D_j$ stand for its $j$th partial derivative,
\[
  D_j F(x)
  =
  \frac{\partial F}{\partial x_j}(x),
  \quad
  x=(x_1,\ldots,x_J)\in\dom\S,
\]
and $D_{ij}$ stand for its second partial derivative in $x_i$ and $x_j$,
\[
  D_{ij} F(x)
  =
  \frac{\partial^2 F}{\partial x_i \partial x_j}(x).
\]
The \emph{portfolio generated by $\S$} is defined by
\begin{equation}\label{eq:generated}
  \pi_j(x)
  :=
  \left(
    D_j \ln \S(x)
    +
    1
    -
    \sum_{k=1}^J
    x_k D_k \ln\S(x)
  \right)
  x_j.
\end{equation}
The main part of the expression in the parentheses is $D_j\ln\S(x)$;
the rest is simply the normalizing constant $c=c(x)$ making $(D_j\ln\S(x)+c)x_j$ a portfolio
(it is a constant in the sense of not depending on $j$).

Now we can state a non-stochastic version of the ``master equation'' of stochastic portfolio theory
(see, e.g., \cite[Theorem~3.1.5]{Fernholz:2002}).

\begin{theorem}\label{thm:master}
  The value process $Z_{\pi}$ of the portfolio $\pi$ generated by $\S$ satisfies
  \begin{equation}\label{eq:master}
    \ln Z_{\pi}(t)
    =
    \ln\frac{\S(\mu(t))}{\S(\mu(0))}
    +
    \Theta(t)
    \quad
    \text{q.a.},
  \end{equation}
  where
  \begin{equation}\label{eq:Theta}
    \Theta(t)
    :=
    \int_0^t
    \frac{-1}{2\S(\mu(s))}
    \sum_{i,j=1}^J
    D_{ij}\S(\mu(s))
    \dd[\mu_i,\mu_j](s).
  \end{equation}
\end{theorem}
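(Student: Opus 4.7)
The plan is to start from the summary equation~\eqref{eq:summary} and substitute the generating formula~\eqref{eq:generated}. Writing $c(x):=1-\sum_k x_k D_k\ln\S(x)$, so that $\pi_j(x)=(D_j\ln\S(x)+c(x))x_j$, I would split $\sum_j\int\pi_j(\mu)\dd\ln\mu_j$ into a main piece $\sum_j\int D_j\ln\S(\mu)\,\mu_j\dd\ln\mu_j$ and a normalizing piece $\int c(\mu)\sum_j\mu_j\dd\ln\mu_j$. The non-stochastic It\^o formula of \cite{\CTXIII} gives $\mu_j\dd\ln\mu_j=\dd\mu_j-\tfrac{1}{2\mu_j}\dd[\mu_j]$ q.a., which turns the main piece into $\sum_j\int D_j\ln\S(\mu)\dd\mu_j-\tfrac12\sum_j\int\mu_j^{-1}D_j\ln\S(\mu)\dd[\mu_j]$; applying It\^o's formula to $\ln\S(\mu)$ then rewrites the first summand as $\ln\S(\mu(t))-\ln\S(\mu(0))-\tfrac12\sum_{i,j}\int D_{ij}\ln\S(\mu)\dd[\mu_i,\mu_j]$. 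Summing the same identity over $j$ and using $\sum_j\mu_j\equiv1$ yields $\sum_j\mu_j\dd\ln\mu_j=-\tfrac12\sum_j\mu_j^{-1}\dd[\mu_j]$, so the normalizing piece equals $-\tfrac12\int c(\mu)\sum_j\mu_j^{-1}\dd[\mu_j]$.

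Next I would expand $\Gamma^*_\pi$ via~\eqref{eq:excess} using $\dd[\ln\mu_j]=\mu_j^{-2}\dd[\mu_j]$ and $\dd[\ln\mu_i,\ln\mu_j]=(\mu_i\mu_j)^{-1}\dd[\mu_i,\mu_j]$, obtaining
\[
  2\Gamma^*_\pi=\sum_j\int(D_j\ln\S(\mu)+c(\mu))\frac{\dd[\mu_j]}{\mu_j}-\sum_{i,j}\int(D_i\ln\S(\mu)+c(\mu))(D_j\ln\S(\mu)+c(\mu))\dd[\mu_i,\mu_j].
\]
The first sum here exactly cancels the $\mu_j^{-1}D_j\ln\S(\mu)\dd[\mu_j]$ and $c(\mu)\mu_j^{-1}\dd[\mu_j]$ leftovers from the previous paragraph. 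In the second sum the constraint $\sum_j\mu_j\equiv1$ forces $\sum_j\dd[\mu_j,\mu_k]=0$ for every $k$, so every term carrying a factor of $c(\mu)$ vanishes and only $-\sum_{i,j}\int D_i\ln\S(\mu)\,D_j\ln\S(\mu)\dd[\mu_i,\mu_j]$ survives.

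Combining, $\ln Z_\pi(t)$ equals $\ln(\S(\mu(t))/\S(\mu(0)))$ minus $\tfrac12\sum_{i,j}\int\bigl(D_{ij}\ln\S(\mu)+D_i\ln\S(\mu)D_j\ln\S(\mu)\bigr)\dd[\mu_i,\mu_j]$, and the elementary identity $D_{ij}\ln\S+D_i\ln\S\,D_j\ln\S=D_{ij}\S/\S$ (valid because $\S>0$) converts this into $\Theta(t)$ of~\eqref{eq:Theta}. The principal obstacle I anticipate is not any individual step but the bookkeeping: three cancellations---the $c(\mu)\mu_j^{-1}\dd[\mu_j]$ pair, the $\mu_j^{-1}D_j\ln\S(\mu)\dd[\mu_j]$ pair, and the killing of $c$-containing cross-terms by $\sum_j\mu_j\equiv1$---must be tracked carefully through the sign and $\tfrac12$ conventions of~\eqref{eq:excess}, so I would organise the proof explicitly around these three cancellations rather than attempt brute-force expansion.
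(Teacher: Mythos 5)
Your argument is correct and is essentially the paper's own proof: both substitute the generating formula \eqref{eq:generated}, eliminate the normalizing term $c(\mu)$ using $\sum_j\mu_j\equiv1$ (so that $\sum_j\dd\mu_j=0$ and $\sum_i\dd[\mu_i,\mu_j]=0$), apply the non-stochastic It\^o formula to $\ln\S(\mu)$, and conclude via the identity $D_{ij}\ln\S+D_i\ln\S\,D_j\ln\S=D_{ij}\S/\S$. The only difference is the entry point: the paper starts one step earlier in the chain, at \eqref{eq:chain-second}, which is already expressed in $\dd\mu_j$ and $\dd[\mu_i,\mu_j]$ and therefore avoids your extra conversion of $\mu_j\dd\ln\mu_j$ back to $\dd\mu_j$ and the associated cancellation bookkeeping against $\Gamma^*_{\pi}$.
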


\begin{proof}
  The middle equality~\eqref{eq:chain-second} in the chain~\eqref{eq:chain-first}--\eqref{eq:chain-last}
  gives for the left-hand side of~\eqref{eq:master}:
  \begin{align}
    \ln Z_{\pi}(t)
    &=
    \sum_j
    \int_0^t
    \frac{\pi_j(\mu(s))}{\mu_j(s)}
    \dd\mu_j(s)
    -
    \frac12
    \sum_{i,j}
    \int_0^t
    \frac{\pi_i(\mu(s))\pi_j(\mu(s))}{\mu_i(s)\mu_j(s)}
    \dd[\mu_i,\mu_j](s)\notag\\
    &=
    \sum_j
    \int_0^t
    \left(
      D_j \ln\S(\mu(s))
      +
      1
      -
      \sum_k
      \mu_k(s) D_k \ln\S(\mu(s))
    \right)
    \dd\mu_j(s)\notag\\
    &\quad{}-
    \frac12
    \sum_{i,j}
    \int_0^t
    \left(
      D_i \ln\S(\mu(s))
      +
      1
      -
      \sum_k
      \mu_k(s) D_k \ln\S(\mu(s))
    \right)\notag\\
    &\qquad{}\times
    \left(
      D_j \ln\S(\mu(s))
      +
      1
      -
      \sum_k
      \mu_k(s) D_k \ln\S(\mu(s))
    \right)
    \dd[\mu_i,\mu_j](s)\notag\\
    &=
    \sum_j
    \int_0^t
    \frac{D_j\S(\mu(s))}{\S(\mu(s))}
    \dd\mu_j(s)\label{eq:left-1}\\
    &\quad{}-
    \frac12
    \sum_{i,j}
    \int_0^t
    \frac{D_i\S(\mu(s))}{\S(\mu(s))}
    \frac{D_j\S(\mu(s))}{\S(\mu(s))}
    \dd[\mu_i,\mu_j](s)
    \quad\text{q.a.},\label{eq:left-2}
  \end{align}
  where the last equality follows from $\sum_k\mu_k=1$.
  Next we apply the It\^o formula to the function
  $
    \ln\S
  $
  on the right-hand side of~\eqref{eq:master};
  the It\^o formula still holds in our non-stochastic setting: cf.\ \cite[Section~6]{\CTXIII}.
  For the first addend on the right-hand side of~\eqref{eq:master} it gives us the expression
  \begin{align*}
    \ln\frac{\S(\mu(t))}{\S(\mu(0))}
    &=
    \sum_j
    \int_0^t
    \frac{D_j \S(\mu(s))}{\S(\mu(s))}
    \dd\mu_j(s)
    +
    \frac12
    \sum_{i,j}
    \int_0^t
    \frac{D_{ij}\S(\mu(s))}{\S(\mu(s))}
    \dd[\mu_i,\mu_j](s)\\
    &\quad{}-
    \frac12
    \sum_{i,j}
    \int_0^t
    \frac{D_i\S(\mu(s))D_j\S(\mu(s))}{\S(\mu(s))^2}
    \dd[\mu_i,\mu_j](s)
  \end{align*}
  equal, q.a., to \eqref{eq:left-1}--\eqref{eq:left-2} minus $\Theta$, as defined in~\eqref{eq:Theta}.
\end{proof}

\section{Special cases}
\label{sec:special}

A positive $C^2$ function $\S$ defined on an open neighbourhood of $\Delta^J$
is a \emph{measure of diversity} if it is symmetric and concave.
In this section we will discuss three examples of measures of diversity.

\subsection{Fernholz's arbitrage opportunity}

In \cite[Section 3.3]{Fernholz:2002}, Fernholz describes an arbitrage opportunity for his stochastic model of the market.
In the non-stochastic setting of this paper his portfolio ceases to be an arbitrage opportunity
but it is still interesting and suggests the possibility of beating the market
(as discussed in the next section).
Now we are interested in the measure of diversity
\begin{equation}\label{eq:quadratic-1}
  \S(x)
  :=
  1
  -
  \frac12
  \sum_{j=1}^J
  x_j^2.
\end{equation}
The components~\eqref{eq:generated} of the corresponding portfolio $\pi$ are
\begin{equation}\label{eq:arbitrage-portfolio}
  \pi_j(x)
  =
  \left(
    \frac{2-x_j}{\S(x)}
    -
    1
  \right)
  x_j.
\end{equation}
Now Theorem~\ref{thm:master} gives the following non-stochastic version of \cite[Example~3.3.3]{Fernholz:2002}.

\begin{corollary}\label{cor:arbitrage}
  The value process $Z_{\pi}$ of the portfolio~\eqref{eq:arbitrage-portfolio} satisfies
  \begin{equation}\label{eq:arbitrage}
    \ln Z_{\pi}(t)
    =
    \ln\frac{\S(\mu(t))}{\S(\mu(0))}
    +
    \sum_{j=1}^J
    \int_0^t
    \frac{\dd[\mu_j](s)}{2\S(\mu(s))}
    \quad
    \text{q.a.}
  \end{equation}
\end{corollary}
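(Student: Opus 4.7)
The plan is to apply Theorem~\ref{thm:master} directly to the measure of diversity~\eqref{eq:quadratic-1}; the corollary is essentially a substitution, once one checks that~\eqref{eq:arbitrage-portfolio} really is the portfolio generated by $\S$ in the sense of~\eqref{eq:generated}.

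First I would compute the derivatives of $\S(x)=1-\tfrac12\sum_j x_j^2$: they are $D_j\S(x)=-x_j$ and $D_{ij}\S(x)=-\delta_{ij}$, so that $D_j\ln\S(x)=-x_j/\S(x)$. Using the identity $\sum_k x_k^2=2(1-\S(x))$, plugging into~\eqref{eq:generated} yields
\[
  \pi_j(x)
  =
  \left(
    -\frac{x_j}{\S(x)}
    +1
    +\frac{\sum_k x_k^2}{\S(x)}
  \right)x_j
  =
  \left(
    -\frac{x_j}{\S(x)}
    -1
    +\frac{2}{\S(x)}
  \right)x_j
  =
  \left(
    \frac{2-x_j}{\S(x)}-1
  \right)x_j,
\]
confirming~\eqref{eq:arbitrage-portfolio}.

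Next I would substitute $D_{ij}\S=-\delta_{ij}$ into the definition~\eqref{eq:Theta} of $\Theta$ to get
\[
  \Theta(t)
  =
  \int_0^t
  \frac{-1}{2\S(\mu(s))}
  \sum_{j=1}^J(-1)\dd[\mu_j](s)
  =
  \sum_{j=1}^J
  \int_0^t
  \frac{\dd[\mu_j](s)}{2\S(\mu(s))},
\]
which is precisely the second term on the right-hand side of~\eqref{eq:arbitrage}. Combining with the master equation~\eqref{eq:master} then gives the corollary.

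There is no real obstacle: the argument is pure substitution into a theorem already proved, and the only thing to watch is keeping track of the two minus signs (one from $D_{ij}\S=-\delta_{ij}$, one from the $-1/(2\S)$ prefactor) which conveniently cancel to produce the positive integrand in~\eqref{eq:arbitrage}.
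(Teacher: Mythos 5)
Your proposal is correct and follows essentially the same route as the paper: substitute $D_{ij}\S(x)=-\III_{i=j}$ into~\eqref{eq:Theta} and invoke Theorem~\ref{thm:master}; the sign bookkeeping and the resulting $\Theta$ match the paper's computation exactly. Your additional verification that~\eqref{eq:arbitrage-portfolio} is the portfolio generated by~\eqref{eq:quadratic-1} via~\eqref{eq:generated} is a correct check of a step the paper states without detail, but it does not change the argument.
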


\begin{proof}
  Plugging $D_{ij}\S(x)=-\III_{i=j}$
  (where $\III_E$ stands for the indicator function of $E$)
  into \eqref{eq:Theta},
  we indeed obtain
  \[
    \Theta(t)
    =
    \int_0^t
    \frac{1}{2\S(\mu(s))}
    \sum_j
    \dd[\mu_j](s).
    \qedhere
  \]
\end{proof}

A slightly cruder but simpler version of Corollary~\ref{cor:arbitrage} is:

\begin{corollary}\label{cor:arbitrage-simplified}
  The value process $Z_{\pi}$ of the portfolio~\eqref{eq:arbitrage-portfolio} satisfies
  \begin{equation}\label{eq:arbitrage-simplified}
    \ln Z_{\pi}(t)
    \ge
    -\ln2
    +
    \frac12
    \sum_{j=1}^J
    [\mu_j](t)
    \quad
    \text{q.a.}
  \end{equation}
\end{corollary}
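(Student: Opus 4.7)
The plan is to deduce Corollary~\ref{cor:arbitrage-simplified} from Corollary~\ref{cor:arbitrage} by bounding the two terms on the right-hand side of~\eqref{eq:arbitrage} separately. The whole argument reduces to elementary bounds on $\S(x) = 1 - \frac12\sum_j x_j^2$ over the closure $\overline{\Delta^J}$ of the simplex; no further stochastic calculus is required.

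First I would establish the two-sided bound $\tfrac12 \le \S(x) \le 1$ for all $x \in \overline{\Delta^J}$. The upper bound follows because $\sum_j x_j^2 \ge 0$, and the lower bound follows because each $x_j \in [0,1]$ with $\sum_j x_j = 1$ gives $\sum_j x_j^2 \le \sum_j x_j = 1$. (The sharper upper bound $1 - 1/(2J)$ from Cauchy--Schwarz is not needed.)

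Next I would apply these bounds to each term on the right-hand side of~\eqref{eq:arbitrage}. For the log-ratio term, using $\S(\mu(t)) \ge 1/2$ and $\S(\mu(0)) \le 1$ gives
\[
  \ln\frac{\S(\mu(t))}{\S(\mu(0))} \ge \ln\frac{1/2}{1} = -\ln 2.
\]
For the integral term, using $\S(\mu(s)) \le 1$ pointwise gives $\frac{1}{2\S(\mu(s))} \ge \frac12$, and since $[\mu_j]$ is non-decreasing, integration against it is monotone, so
\[
  \sum_{j=1}^J \int_0^t \frac{\dd[\mu_j](s)}{2\S(\mu(s))}
  \ge
  \frac12 \sum_{j=1}^J [\mu_j](t).
\]
Adding these two inequalities yields~\eqref{eq:arbitrage-simplified}.

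There is no real obstacle here: the statement is a quantitative weakening of Corollary~\ref{cor:arbitrage}, and the only thing to verify carefully is that the directions of the inequalities on $\S$ match up with the sign of each term (numerator versus denominator in the log ratio, and denominator in the integrand), which is why the uniform two-sided bound $\S \in [1/2, 1]$ is convenient.
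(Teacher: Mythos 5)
Your proposal is correct and is essentially the paper's own argument: the paper's proof consists of the single observation that $\S\in[1/2,1]$ on $\overline{\Delta^J}$, applied to the two terms of~\eqref{eq:arbitrage} exactly as you do (numerator/denominator in the log ratio, denominator in the integrand). You have merely written out the details that the paper leaves implicit, so nothing further is needed.
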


\begin{proof}
  It suffices to notice that $\S\in[1/2,1]$.
\end{proof}

\subsection{Entropy-weighted portfolio}

The archetypal measure of diversity \cite[Examples~3.1.2 and~3.4.3]{Fernholz:2002} is the entropy function
\[
  \S(x)
  :=
  -
  \sum_{j=1}^J
  x_j \ln x_j.
\]
Using~\eqref{eq:generated},
the components of the corresponding \emph{entropy-weighted portfolio} can be computed as
\begin{equation}\label{eq:entropy-portfolio}
  \pi_j(x)
  =
  -\frac{x_j\ln x_j}{\S(x)}.
\end{equation}
Calculating the drift term $\Theta$ in Theorem~\ref{thm:master},
we obtain the following corollary
(a non-stochastic version of \cite[Theorem~2.3.4]{Fernholz:2002}).

\begin{corollary}\label{cor:entropy}
  The value process $Z_{\pi}$ of the entropy-weighted portfolio $\pi$ satisfies
  \begin{equation}\label{eq:entropy}
    \ln Z_{\pi}(t)
    =
    \ln\frac{\S(\mu(t))}{\S(\mu(0))}
    +
    \int_0^t
    \frac{\dd\Gamma^*_{\mu}(s)}{\S(\mu(s))}
    \quad
    \text{q.a.}
  \end{equation}
\end{corollary}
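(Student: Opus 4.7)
The plan is to invoke Theorem~\ref{thm:master} directly and then reduce the drift term $\Theta$ for the entropy measure of diversity to the expression $\int_0^t d\Gamma^*_\mu(s)/\S(\mu(s))$ claimed in \eqref{eq:entropy}. Since the first addend on the right-hand side of \eqref{eq:entropy} coincides with the first addend on the right-hand side of \eqref{eq:master}, the whole task is to identify $\Theta$ with $\int_0^t d\Gamma^*_\mu(s)/\S(\mu(s))$.

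First I would compute the partial derivatives of $\S(x) = -\sum_j x_j \ln x_j$. The first partials are $D_j \S(x) = -\ln x_j - 1$ and the mixed second partials vanish, leaving only
\[
  D_{ij} \S(x) = -\frac{\III_{i=j}}{x_j}.
\]
Plugging this into the definition \eqref{eq:Theta} of $\Theta$ collapses the double sum to a single sum and produces
\[
  \Theta(t)
  =
  \int_0^t
  \frac{1}{2\S(\mu(s))}
  \sum_{j=1}^J
  \frac{\dd[\mu_j](s)}{\mu_j(s)}.
\]

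The second step is to recognise the integrand factor $\frac{1}{2}\sum_j \frac{\dd[\mu_j](s)}{\mu_j(s)}$ as $\dd\Gamma^*_\mu(s)$. This is exactly the content of the computation \eqref{eq:Gamma-mu-1}--\eqref{eq:Gamma-mu-2}, where the subtrahend in \eqref{eq:Gamma-mu-1} vanishes because $\sum_j \mu_j \equiv 1$ is monotonic (in fact constant) and therefore has zero quadratic variation; this yields $2\Gamma^*_\mu(t) = \sum_j \int_0^t \mu_j\,\dd[\ln\mu_j] = \sum_j \int_0^t \dd[\mu_j]/\mu_j$. Differentiating this identity (interpreted as equality of the underlying increasing processes) gives $\dd\Gamma^*_\mu(s) = \tfrac12 \sum_j \dd[\mu_j](s)/\mu_j(s)$ quasi always.

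Combining these two observations yields $\Theta(t) = \int_0^t \dd\Gamma^*_\mu(s)/\S(\mu(s))$ q.a., and plugging this into \eqref{eq:master} gives \eqref{eq:entropy}. No step looks hard: the derivative calculation is elementary, the master equation does all the stochastic work, and the identification of $\Gamma^*_\mu$ has already been carried out in the discussion preceding Section~\ref{sec:master}. The only thing to be slightly careful about is treating $1/\S(\mu(s))$ as an integrand against the (pathwise) nondecreasing measure $\dd\Gamma^*_\mu$, which is unproblematic since $\S(\mu(\cdot))$ is continuous and strictly positive on $\Delta^J$.
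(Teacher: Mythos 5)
Your proposal is correct and follows the paper's own proof: compute $D_{ij}\S(x)=-\III_{i=j}/x_j$, plug it into \eqref{eq:Theta}, and identify the resulting integrand with $\dd\Gamma^*_{\mu}/\S(\mu)$ via \eqref{eq:Gamma-mu-2}. Your extra remarks on the vanishing quadratic variation of $\sum_j\mu_j$ and on integrating $1/\S(\mu)$ against the nondecreasing process $\Gamma^*_{\mu}$ just spell out what the paper leaves implicit.
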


\begin{proof}
  Plugging $D_{ij}\S(x)=-\III_{i=j}/x_j$ into \eqref{eq:Theta},
  we obtain
  \[
    \Theta(t)
    =
    \int_0^t
    \frac{1}{2\S(\mu(s))}
    \sum_j
    \frac{\dd[\mu_j](s)}{\mu_j(s)}.
  \]
  It remains to compare this expression with \eqref{eq:Gamma-mu-2}.
\end{proof}

\subsection{Diversity-weighted portfolios with parameter $p$}

Fix $p\in(0,1)$.
Define the \emph{measure of diversity with parameter $p\in(0,1)$} \cite[Example~3.4.4]{Fernholz:2002} as
\begin{equation*}
  \D_p(x)
  :=
  \left(
    \sum_{j=1}^J x_j^p
  \right)^{1/p}.
\end{equation*}
The \emph{$p$-diversity-weighted portfolio} has components
\begin{equation}\label{eq:D-p-portfolio}
  \pi_j(t)
  :=
  \frac{\mu_j(t)^p}{\sum_{i=1}^J \mu_i(t)^p}.
\end{equation}

The following corollary is a non-stochastic version of \cite[Example~3.4.4]{Fernholz:2002}.

\begin{corollary}\label{cor:D-p}
  The value process $Z_{\pi}$ of the diversity-weighted portfolio $\pi$ with parameter $p\in(0,1)$
  satisfies
  \begin{equation}\label{eq:D-p}
    \ln Z_{\pi}(t)
    =
    \ln\frac{\D_p(\mu(t))}{\D_p(\mu(0))}
    +
    (1-p)
    \Gamma^*_{\pi}(t)
    \quad
    \text{q.a.}
  \end{equation}
\end{corollary}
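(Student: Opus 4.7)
The plan is to invoke Theorem~\ref{thm:master} with $\S:=\D_p$, for which I need to check two things: first, that the portfolio generated by $\D_p$ via formula~\eqref{eq:generated} coincides with the $p$-diversity-weighted portfolio~\eqref{eq:D-p-portfolio}; and second, that the drift term $\Theta$ of~\eqref{eq:Theta} for $\S=\D_p$ equals $(1-p)\Gamma^*_{\pi}$.

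For the first check, I would exploit the fact that $\D_p$ is positively homogeneous of degree~$1$, so by Euler's identity $\sum_k x_k D_k \ln \D_p(x) = 1$, which kills the normalizing constant in the bracket of~\eqref{eq:generated}. Writing $\sigma_p(x):=\sum_k x_k^p$, differentiation of $\ln\D_p=(1/p)\ln\sigma_p$ gives $D_j\ln\D_p(x)=x_j^{p-1}/\sigma_p(x)$, so~\eqref{eq:generated} collapses to $\pi_j(x)=x_j^p/\sigma_p(x)$, matching~\eqref{eq:D-p-portfolio}.

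For the second check, I would compute the Hessian of $\D_p=\sigma_p^{1/p}$ by the product rule applied to $D_j\D_p=\sigma_p^{1/p-1}x_j^{p-1}$, which yields
\[
  \frac{D_{ij}\D_p(x)}{\D_p(x)}
  =
  (1-p)
  \left[
    \sigma_p(x)^{-2} x_i^{p-1} x_j^{p-1}
    -
    \sigma_p(x)^{-1} x_j^{p-2}\III_{i=j}
  \right].
\]
Plugging this into~\eqref{eq:Theta} and using $\pi_j(\mu)=\mu_j^p/\sigma_p(\mu)$ together with the elementary identities $\dd[\ln\mu_j]=\dd[\mu_j]/\mu_j^2$ and $\dd[\ln\mu_i,\ln\mu_j]=\dd[\mu_i,\mu_j]/(\mu_i\mu_j)$, the factors $\sigma_p(\mu)$ cancel: the diagonal piece becomes $\frac{1-p}{2}\sum_j\int\pi_j(\mu)\dd[\ln\mu_j]$ and the off-diagonal piece becomes $-\frac{1-p}{2}\sum_{i,j}\int\pi_i(\mu)\pi_j(\mu)\dd[\ln\mu_i,\ln\mu_j]$, which together reconstruct $(1-p)\Gamma^*_{\pi}$ in the form~\eqref{eq:excess}.

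The calculation itself is routine, so the only real obstacle is verifying this final cancellation cleanly: one must check that the $\sigma_p(\mu)$ factors appearing in $D_{ij}\D_p/\D_p$ combine with the $\mu_j^p$ factors in $\pi_j$ so as to leave \emph{no} residual $\sigma_p$ (or $\D_p$) inside the integrals; this is precisely why the output is $(1-p)\Gamma^*_{\pi}$ rather than a more complicated expression involving $\D_p(\mu)$. Once this is confirmed, Theorem~\ref{thm:master} delivers~\eqref{eq:D-p} at once.
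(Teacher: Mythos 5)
Your proposal is correct and follows essentially the same route as the paper: apply Theorem~\ref{thm:master} with $\S=\D_p$, compute the Hessian $D_{ij}\D_p/\D_p$, and observe that the $\sigma_p$ factors combine with $\pi_j(\mu)=\mu_j^p/\sigma_p(\mu)$ to yield exactly $(1-p)\Gamma^*_{\pi}$ in the form~\eqref{eq:excess}. The only difference is that you make explicit, via Euler's identity for the degree-one homogeneous function $\D_p$, the check that~\eqref{eq:generated} reduces to~\eqref{eq:D-p-portfolio}, a step the paper leaves implicit.
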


\begin{proof}
  Now \eqref{eq:Theta} gives
  \begin{align*}
    \Theta(t)
    &=
    \int_0^t
    \frac{-1}{2\D_p(\mu(s))}\\
    &\quad{}\times
    \Biggl(
      \sum_{i,j}
      (1-p)
      \left(
        \mu_i(s)
        \mu_j(s)
      \right)^{p-1}
      \left(
        \sum_k \mu_k(s)^p
      \right)^{1/p-2}
      \dd[\mu_i,\mu_j](s)\\
      &\quad{}+
      \sum_{j}
      (p-1)
      \left(
        \mu_j(s)
      \right)^{p-2}
      \left(
        \sum_k \mu_k(s)^p
      \right)^{1/p-1}
      \dd[\mu_j](s)
    \Biggr)\\
    &=
    \frac{1-p}{2}
    \int_0^t
    \left(
      \sum_{j}
      \pi_j(\mu(s))
      \frac{\dd[\mu_j](s)}{\mu_j(s)^2}
      -
      \sum_{i,j}
      \pi_j(\mu(s))
      \frac{\dd[\mu_i,\mu_j](s)}{\mu_i(s)\mu_j(s)}
    \right)\\
    &=
    (1-p)
    \Gamma^*_{\pi}(t).
    \qedhere
  \end{align*}
\end{proof}

Corollary~\ref{cor:D-p} immediately implies:
\begin{corollary}\label{cor:D-p-simplified}
  The value process $Z_{\pi}$ of the diversity-weighted portfolio $\pi$ with parameter $p\in(0,1)$
  satisfies
  \begin{equation}\label{eq:D-p-simplified}
    \ln Z_{\pi}(t)
    \ge
    (1-p)
    \Gamma^*_{\pi}(t)
    -
    \frac{1-p}{p}
    \ln J
    \quad
    \text{q.a.}
  \end{equation}
\end{corollary}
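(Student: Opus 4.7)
The plan is to invoke Corollary~\ref{cor:D-p} directly and bound the first term on the right-hand side of~\eqref{eq:D-p} using elementary estimates on the measure of diversity $\D_p$ restricted to (the closure of) the simplex $\Delta^J$. Since~\eqref{eq:D-p} is an exact identity q.a., all that is needed is a two-sided estimate for $\D_p(x)$ when $x\in\overline{\Delta^J}$: a lower bound on $\D_p(\mu(t))$ and an upper bound on $\D_p(\mu(0))$.

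For the lower bound I would use that $p\in(0,1)$ and $x_j\in[0,1]$ together imply $x_j^p\ge x_j$ componentwise; summing and using $\sum_j x_j=1$ gives $\sum_j x_j^p\ge 1$, hence $\D_p(x)\ge 1$. For the upper bound I would apply Jensen's inequality to the concave function $u\mapsto u^p$ (or, equivalently, the power-mean inequality), obtaining
\[
  \frac{1}{J}\sum_{j=1}^J x_j^p
  \le
  \left(\frac{1}{J}\sum_{j=1}^J x_j\right)^{p}
  =
  J^{-p},
\]
so $\sum_j x_j^p\le J^{1-p}$ and therefore $\D_p(x)\le J^{(1-p)/p}$.

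Combining these bounds yields
\[
  \ln\frac{\D_p(\mu(t))}{\D_p(\mu(0))}
  \ge
  \ln 1 - \ln J^{(1-p)/p}
  =
  -\frac{1-p}{p}\ln J,
\]
and substituting this estimate into the identity~\eqref{eq:D-p} of Corollary~\ref{cor:D-p} gives~\eqref{eq:D-p-simplified}. There is no real obstacle here: the estimates are both classical, and the only thing worth remarking on is that they hold uniformly on $\overline{\Delta^J}$, so the bound is valid at every time $t$ (and, in particular, independently of the path $\mu$).
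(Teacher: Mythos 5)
Your proposal is correct and follows essentially the same route as the paper: the paper's proof also just cites the bounds $\D_p\in[1,J^{(1-p)/p}]$ on the simplex and plugs the resulting estimate $\ln\bigl(\D_p(\mu(t))/\D_p(\mu(0))\bigr)\ge-\frac{1-p}{p}\ln J$ into~\eqref{eq:D-p}. The only difference is that you spell out the elementary derivations of these bounds, which the paper leaves to a reference.
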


\begin{proof}
  Since $\D_p\in[1,J^{(1-p)/p}]$, we have
  \[
    \ln\frac{\D_p(\mu(t))}{\D_p(\mu(0))}
    \ge
    -\frac{1-p}{p} \ln J
  \]
  (cf.\ \cite[(7.6)]{Fernholz/Karatzas:2009});
  plugging this into \eqref{eq:D-p} gives~\eqref{eq:D-p-simplified}.
\end{proof}

\section{Beating the market}
\label{sec:interpretation}

The results of the previous section have striking implications for our idealized financial market.
The easiest to discuss is Corollary~\ref{cor:arbitrage-simplified}.
It can be interpreted, very informally, as the following Fisherian disjunction:
either the variation of each stock in the market decays,
in that the total quadratic variation $[\mu_j](\infty)$ of each of the $J$ market weights over $[0,\infty)$ is finite,
or we can beat the market in the sense that $\lim_{t\to\infty}Z_{\pi}(t)=\infty$
(cf.\ \cite{Fisher:1973}, p.~42).
Notice that the second alternative of the disjunction also takes care of the ``q.a.''\ in \eqref{eq:arbitrage-simplified}.

The portfolio~\eqref{eq:arbitrage-portfolio} is particularly tame
(or admissible, in Fernholz's \cite[Section 3.3]{Fernholz:2002} terminology):
it is long-only,
it never loses more than $50\%$ of its value relative to the market portfolio
(by \eqref{eq:arbitrage-simplified}),
and it never invests more than 3 times more than the market portfolio
in any of the stocks.

A more specific possible interpretation of Corollary~\ref{cor:arbitrage-simplified} is based on the efficient market hypothesis
in the form that was so forcefully advocated in the bestseller \cite{Malkiel:2016} by Burton G. Malkiel;
for him,
``the strongest evidence suggesting that markets are generally quite efficient is that professional investors do not beat the market.''
Even if there are ways to beat the market,
it is often believed that they should involve something unusual
rather than merely simple portfolios
such as \eqref{eq:arbitrage-portfolio}, \eqref{eq:entropy-portfolio}, or \eqref{eq:D-p-portfolio}
(widely known since at least 2002).
According to this interpretation, Corollary~\ref{cor:arbitrage-simplified} implies that in efficient markets
we expect market variation to die down eventually.

If we believe that the variation in our stock market will never die down,
we are forced to admit that Corollary~\ref{cor:arbitrage-simplified}
``opens the door to superior long-term investment returns
through disciplined active investment management'' \cite[Section~1.3]{Lo/Mackinlay:1999}.
This is the interpretation on which typical practical applications of stochastic portfolio theory are based
(see, e.g., \cite{Fernholz:1999ETF}, which, however,
is based on the stochastic versions of Corollaries~\ref{cor:D-p} and~\ref{cor:D-p-simplified}
rather than Corollary~\ref{cor:arbitrage-simplified}).

Corollary~\ref{cor:arbitrage-simplified} is a cruder version of Corollary~\ref{cor:arbitrage}
that replaces the first addend on the right-hand side of \eqref{eq:arbitrage} by its lower bound
and the denominator in the second addend by its upper bound.
Corollary~\ref{cor:arbitrage} is more precise in that it decomposes the growth in the portfolio's value
into two components:
one related to the growth in the diversity $\S(\mu)$ of the market weights
and the other related to the accumulation of the variation of the market weights.

It is standard in stochastic portfolio theory to assume
both that the market does not become concentrated, or almost concentrated, in a single stock
and that there is a minimal level of stock volatility;
precise versions of these assumptions are referred to as diversity and non-degeneracy, respectively.
We will see that the results of the previous section can be interpreted as saying that we can beat the market
unless it loses its diversity or degenerates.
Corollary~\ref{cor:arbitrage-simplified} says that, in fact,
the condition of non-degeneracy alone is sufficient;
this follows from the representation $\dd[\ln\mu_j]=\dd[\mu_j]/\mu_j^2$.
(But remember that our exposition is in terms of market weights $\mu_j$ rather than prices $S_j$,
which are usually used in stochastic portfolio theory.)

Corollary~\ref{cor:entropy} relies on both assumptions, diversity and non-degeneracy.
If the market maintains its diversity, we expect the first addend on the right-hand side of~\eqref{eq:entropy} to stay bounded below,
and if, in addition, the market does not degenerate, we expect the second addend to increase steadily.
As a result, the entropy-weighted portfolio outperforms the market.

To discuss Corollaries~\ref{cor:D-p} and~\ref{cor:D-p-simplified},
it is convenient to extend our discussion of $\Gamma^*_{\mu}$ given in Section~\ref{sec:market}
to more general $\Gamma^*_{\pi}$.
Let us now rewrite twice the excess growth term~\eqref{eq:excess},
$2\Gamma^*_{\pi}$, as
\begin{equation*}
  2\Gamma^*_{\pi}(t)
  =
  \int_0^t
  \sum_{j=1}^J
  \pi_j(\mu(s))
  \dd[\ln\mu_j](s)
  -
  \left[
    \sum_{j=1}^J
    \int
    \pi_j(\mu)
    \dd
    \ln\mu_j
  \right](t).
\end{equation*}
Define (using our fixed language) a sequence of partitions $T^1,T^2,\ldots$
that is fine for all processes used in this paper and set,
for a given partition $T^n=(T^n_k)_{k=0}^{\infty}$,
\begin{align*}
  \mu_{j,k} &:= \mu_j(T^n_k\wedge t), & k=0,1,\ldots,\\
  \Delta\ln\mu_{j,k} &:= \ln\mu_{j,k} - \ln\mu_{j,k-1}, & k=1,2,\ldots,\\
  \pi_{j,k} &:= \pi(\mu_{j,k}), & k=0,1,\ldots,
\end{align*}
with the dependence on $n$ suppressed.
We can then regard
\begin{equation}\label{eq:nth}
  2\Gamma^{*,n}_{\pi}(t)
  :=
  \sum_{k=1}^{\infty}
  \sum_{j=1}^J
  \pi_{j,k-1}
  \Delta\mu_{j,k}^2
  -
  \sum_{k=1}^{\infty}
  \left(
    \sum_{j=1}^J
    \pi_{j,k-1}
    \Delta\ln\mu_{j,k}
  \right)^2
\end{equation}
as the $n$th approximation to $2\Gamma^*_{\pi}(t)$;
it can be shown that
\[
  2\Gamma^{*,n}_{\pi}(t)
  \to
  2\Gamma^*_{\pi}(t)
  \quad
  \text{ucqa}.
\]
Rewriting \eqref{eq:nth} as
\begin{equation*}
  2\Gamma^{*,n}_{\pi}(t)
  =
  \sum_{k=1}^{\infty}
  \sum_{j=1}^J
  \pi_{j,k-1}
  \left(
    \Delta\mu_{j,k}
    -
    \sum_{i=1}^J
    \pi_{i,k-1}
    \Delta\ln\mu_{i,k}
  \right)^2,
\end{equation*}
we can see that this expression is the cumulative variance of the logarithmic returns $\Delta\ln\mu_{j,k}$
over the time interval $[T^n_{k-1}\wedge t,T^n_k\wedge t]$
w.r.\ to the ``portfolio probability measure'' $Q(\{j\}):=\pi_{j,k-1}$.
This makes the expression~\eqref{eq:summary} very intuitive:
the excess growth rate of the portfolio $\pi$ over the naive expression
is determined by the volatility of the market weights w.r.\ to $\pi$.

As already mentioned,
the stochastic versions of Corollaries~\ref{cor:D-p} and~\ref{cor:D-p-simplified}
have been used for active portfolio management \cite{Fernholz:1999ETF}.
The remarks made above about the relation between Corollaries~\ref{cor:arbitrage} and~\ref{cor:arbitrage-simplified}
are also applicable to Corollaries~\ref{cor:D-p} and~\ref{cor:D-p-simplified};
the latter replaces the first addend on the right-hand side of \eqref{eq:D-p} by its lower bound.
Corollary~\ref{cor:D-p} decomposes the growth in the value of the diversity-weighted portfolio into two components,
one related to the growth in the diversity $\D_p(\mu)$ of the market weights
and the other related to the accumulation of the diversity-weighted variance of the market weights.
Corollary~\ref{cor:D-p-simplified} ignores the first component,
which does not make it vacuous since $\D_p$ is bounded,
always being between $1$ (corresponding to a market concentrated in one stock)
and $J^{(1-p)/p}$ (corresponding to a market with equal capitalizations of all $J$ stocks).

Several explanations have been suggested for the somewhat counterintuitive disjunction
stated at the beginning of this section:
\begin{itemize}
\item
  If we include all stocks traded in a real-world market in our model,
  perhaps making $J$ very large,
  the portfolio \eqref{eq:generated}
  and its special cases \eqref{eq:arbitrage-portfolio}, \eqref{eq:entropy-portfolio}, and \eqref{eq:D-p-portfolio}
  (particularly the last two) will not be efficient
  since they will be forced to invest into smaller and so less liquid stocks;
  it is known that portfolios generated by measures of diversity
  invest into smaller stocks more heavily than the market portfolio does
  \cite[Proposition~3.4.2]{Fernholz:2002}.
\item
  There is another explanation related to this common feature of the portfolios discussed in this paper that outperform the market
  (increased weights of smaller stocks as compared with the market).
  Over the last decades, such portfolios have been adversely affected by the tendency of larger companies
  to pay higher dividends
  (cf., e.g., \cite[Figure~7.4]{Fernholz:2002},
  describing the performance of an index that has been used in investment practice).
  The role of differential dividend rates in maintaining market diversity
  is emphasized in \cite{Fernholz:1999}.
\item
  If we restrict our attention only to $J$ largest stocks traded in a real-world market,
  for a moderately large $J$ (such as $J=500$ for S\&P 500),
  the performance of portfolios such as \eqref{eq:arbitrage-portfolio}, \eqref{eq:entropy-portfolio}, or \eqref{eq:D-p-portfolio}
  w.r.\ to this smaller ``market'' (which is now, in fact, a large cap market index)
  will be affected by the phenomenon of ``leakage''
  \cite[Example~4.3.5 and Figure~7.5]{Fernholz:2002}.
\end{itemize}

\section{Fernholz's master martingale and Stroock--Varadhan martingales}
\label{sec:additive}

One way to restate Theorem~\ref{thm:master} is to say that
\begin{equation}\label{eq:Fernholz}
  \frac{\S(\mu(t))}{\S(\mu(0))}
  \exp
  \left(
    -\frac12
    \sum_{i,j=1}^J
    \int_0^t
      \frac{D_{ij}\S(\mu)}{\S(\mu)}
    \dd[\mu_i,\mu_j]
  \right)
\end{equation}
is a value process q.a.;
in the terminology of \cite{\CTXIII}, it is a continuous martingale.
In this section we will see that Fernholz's master martingale~\eqref{eq:Fernholz}
is in fact a very natural object,
and not just a product of formal manipulations with the It\^o formula,
as might have appeared from its derivation in Section~\ref{sec:master}.
Connections with recent papers \cite{Karatzas/Ruf:2017} and \cite{Fernholz/etal:arXiv1608}
will be discussed later in the section.

Let $f$ be a $C^2$ function defined on an open neighbourhood $\dom f$ of $\Delta^J$ in $\R^J$.
The non-stochastic It\^o formula \cite{\CTXIII} implies that
\begin{equation}\label{eq:Stroock-Varadhan}
  f(\mu(t))
  -
  f(\mu(0))
  -
  \frac12
  \sum_{i,j=1}^J
  \int_0^t
    D_{ij} f(\mu)
  \dd[\mu_i,\mu_j]
  =
  \sum_{j=1}^J
  \int_0^t
    D_{j} f(\mu)
  \dd\mu_j
  \quad
  \text{q.a.},
\end{equation}
and so the left-hand side of \eqref{eq:Stroock-Varadhan} is a continuous martingale,
which we will refer to as the \emph{Stroock--Varadhan martingale} \cite[(5.4.2)]{Karatzas/Shreve:1991};
it is a non-stochastic version of the classical martingales used by Stroock and Varadhan
in their study of diffusion processes.

Fernholz's master martingale~\eqref{eq:Fernholz}
is the Dol\'eans exponential of the Stroock--Varadhan martingale on the left-hand side of~\eqref{eq:Stroock-Varadhan}
for $f:=\ln\S$.
Indeed, applying \eqref{eq:standard-1} gives the Dol\'eans exponential
\begin{multline*}
  \frac{\S(\mu(t))}{\S(\mu(0))}
  \exp
  \biggl(
    -
    \frac12
    \sum_{i,j=1}^J
    \int_0^t
      D_{ij} f(\mu)
    \dd[\mu_i,\mu_j]\\
    -
    \frac12
    \sum_{i,j=1}^J
    \int_0^t
      D_{i} f(\mu)
      D_{j} f(\mu)
    \dd[\mu_i,\mu_j]
  \biggr)
\end{multline*}
of the left-hand side of~\eqref{eq:Stroock-Varadhan},
which is equal, by the identity
\[
  D_{ij} f
  =
  \frac{D_{ij}\S}{\S}
  -
  \frac{D_{i}\S}{\S}
  \frac{D_{j}\S}{\S}
  =
  \frac{D_{ij}\S}{\S}
  -
  D_{i}f
  D_{j}f,
\]
to \eqref{eq:Fernholz}.
If we regard the Stroock--Varadhan martingale to be an additive process,
Fernholz's master martingale \eqref{eq:Fernholz} becomes its multiplicative counterpart.
``Additive'' and ``multiplicative'' is the terminology used in \cite{Karatzas/Ruf:2017,Fernholz/etal:arXiv1608}
(more carefully than in this paper),
and the relation between the Stroock--Varadhan martingale on the left-hand side of~\eqref{eq:Stroock-Varadhan}
and Fernholz's master martingale~\eqref{eq:Fernholz}
is somewhat analogous to the relation between the additive Bachelier formula \cite[Section~11.2]{Shafer/Vovk:2001}
and the multiplicative Black--Scholes formula \cite[Section~11.3]{Shafer/Vovk:2001} in option pricing.

The papers \cite{Karatzas/Ruf:2017} and \cite{Fernholz/etal:arXiv1608} study additive portfolio generation in depth.
In particular, these papers give numerous interesting examples
(including \eqref{eq:quadratic-2} below).

We can rewrite \eqref{eq:arbitrage-simplified} in Corollary~\ref{cor:arbitrage-simplified} as
\begin{equation*}
  Z_{\pi}(t)
  \ge
  \frac12
  \exp
  \left(
    \frac12
    \sum_{j=1}^J
    [\mu_j](t)
  \right)
  \quad
  \text{q.a.},
\end{equation*}
which implies
\begin{equation}\label{eq:arbitrage-A}
  Z_{\pi}(\tau_A)
  \ge
  \frac12
  e^{A/2}
  \quad
  \text{q.a.},
\end{equation}
where $A$ is a positive constant, $\tau_A$ is the stopping time
\begin{equation}\label{eq:tau}
  \tau_A
  :=
  \min
  \left\{
    t \st \sum_j [\mu_j](t) = A
  \right\},
\end{equation}
and $Z_{\pi}(\infty):=\infty$.

Qualitatively, \eqref{eq:arbitrage-A} means that the market satisfies Fernholz's arbitrage-type property:
we can beat a non-degenerate market (interpreting non-degeneracy as $\tau_A<\infty$ for all $A$).
The Stroock--Varadhan martingale on the left-hand side of~\eqref{eq:Stroock-Varadhan} also gives a Fernholz-type arbitrage,
which is, however, polynomial (and even linear) in $A$, unlike \eqref{eq:arbitrage-A}.
Setting
\begin{equation}\label{eq:quadratic-2}
  f(x)
  :=
  -
  \frac12
  \sum_{j=1}^J
  x_j^2
\end{equation}
(cf.\ \eqref{eq:quadratic-1}),
we can rewrite the continuous martingale on the left-hand side of~\eqref{eq:Stroock-Varadhan} as
\[
  Y(t)
  :=
  \frac12
  \sum_{j=1}^J
  \mu_j(0)^2
  -
  \frac12
  \sum_{j=1}^J
  \mu_j(t)^2
  +
  \frac12
  \sum_{j=1}^J
  [\mu_j]
  \ge
  -\frac12
  +
  \frac12
  \sum_{j=1}^J
  [\mu_j];
\]
therefore, $X:=2Y+1$ is a nonnegative continuous martingale satisfying $X(0)=1$ and
\begin{equation}\label{eq:SV}
  X(\tau_A)
  \ge
  A.
\end{equation}
Therefore, this is an alternative method for achieving the same qualitative goal,
$X(\tau_A)\to\infty$ as $A\to\infty$.
Quantitatively the result might appear weaker;
after all, we lose the exponential growth rate in $A$.
However, there is a range of $A$ (roughly between $0.7$ and $4.3$)
where the Stroock--Varadhan martingale $X$ performs better:
see Figure~\ref{fig:comparison}.

\begin{figure}[tb]
  \begin{center}
    \includegraphics[width=0.6\textwidth]{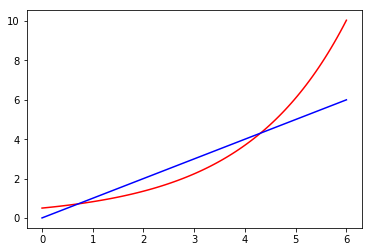}
  \end{center}
  \caption{The values of the Fernholz martingale $Z_\pi$ (red curve)
    and the Stroock--Varadhan martingale $X$ (blue straight line) at time $\tau_A$;
    the horizontal axis is labelled by the values of $A$}
  \label{fig:comparison}
\end{figure}

\section{Conclusion}
\label{sec:conclusion}

Figure~\ref{fig:comparison} gives two functions $g$ such that a final capital of $g(A)$ is achievable at time $\tau_A$.
It would be interesting to characterize the class of such functions $g$.
A related question is:
what is the best growth rate of $g(A)$ as $A\to\infty$?
This question can be asked in both stochastic and non-stochastic settings.
These are some directions of further research for non-stochastic theory:
\begin{itemize}
\item
  A natural direction is to try and strip other results of stochastic portfolio theory
  of their stochastic assumptions.
  First of all, it should be possible to extend Theorem~\ref{thm:master} to functions $\S$
  that are not smooth (as in \cite[Theorem~4.2.1]{Fernholz:2002});
  the existence of local time in a non-stochastic setting is shown in \cite{Perkowski/Promel:2015local}
  and, in the case of continuous price paths, can be deduced from the main result of \cite{\CTIV}.
\item
  Another direction is to extend this paper's results to general num\'eraires
  (this paper uses the value of the market portfolio as our num\'eraire).
\item
  Finally, it would be very interesting to extend some of the results to c\`adl\`ag price paths.
\end{itemize}

\subsection*{Acknowledgements}

The first draft of this paper was prompted by discussions with Martin Schweizer and D\'aniel B\'alint in December 2017.
I am grateful to Ioannis Karatzas for information on the existing literature
(in particular, he drew my attention to connections between Section~\ref{sec:additive} of this paper
and additive portfolio generation in \cite{Karatzas/Ruf:2017} and \cite{Fernholz/etal:arXiv1608})
and for his advice on presentation, notation, and terminology.
Thanks to Peter Carr, Marcel Nutz, Philip Protter, and Glenn Shafer for useful comments.

\appendix
\section{Connections with the foundations of game-theoretic probability}

In this appendix we will see yet another method of achieving the qualitative goal
of $\lim_{A\to\infty}X(\tau_A)=\infty$ for a nonnegative supermartingale $X$.
The result will be weaker than both functions in Figure~\ref{fig:comparison},
but it will shed light on a seemingly paradoxical feature of continuous-time game-theoretic probability.

The method uses the non-stochastic Dubins--Schwarz theorem presented in \cite{\CTIV}
and is based on the following apparent paradox,
which we first discuss informally.
As agreed in Section~\ref{sec:market},
we regard $\mu_1,\ldots,\mu_J,1$ as tradable securities.
According to the non-stochastic Dubins--Schwarz theorem and a standard property of Brownian motion,
with very high lower probability all $J$ securities will eventually hit zero
if their volatility is appreciable.
When this happens, the normalized value of the market $\mu_1+\cdots+\mu_J$ will be 0 rather than 1,
which is impossible.
Therefore, we expect an event of a low upper game-theoretic probability to happen,
i.e., we expect to be able to outperform the market.
This is formalized in the following statement:

\begin{proposition}
  For any constant $A>0$, there is a nonnegative supermartingale $X$ such that $X(0)=1$ and
  \begin{equation}\label{eq:arbitrage-our}
    X(\tau_A)
    \ge
    1.25
    J^{-3/2} A^{1/2}
    \quad
    \text{q.a.},
  \end{equation}
  where $\tau_A$ is the stopping time \eqref{eq:tau}
  and $X(\infty)$ is interpreted as $\infty$.
\end{proposition}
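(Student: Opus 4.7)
The plan is to turn the paradox sketched in the preamble into a quantitative estimate. By the non-stochastic Dubins--Schwarz theorem of \cite{\CTIV}, $\mu_j$ is morally a Brownian motion run for duration $[\mu_j]$; since $\mu_j>0$ always, the event that $[\mu_j]$ reaches a large level $T$ without $\mu_j$ hitting $0$ has small Brownian probability, which converts into a martingale gain of order $\sqrt{T}/\mu_j(0)$. I would set $T:=A/J$ and build, for each $j=1,\dots,J$, a nonnegative bounded martingale $X_j$ with $X_j(0)=1$ that realises this gain on $\{[\mu_j](\tau_A)\ge T\}$. Then $X:=(1/J)\sum_j X_j$ is a nonnegative supermartingale starting at $1$, and since $\sum_j[\mu_j](\tau_A)=A$ forces $[\mu_{j^*}](\tau_A)\ge T$ for at least one $j^*$, the average picks up a gain of the desired order.

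Concretely, with $g(y):=2\Phi(y)-1$, I would define
\[
f_j(x,u) := \frac{g(x/\sqrt{T-u})}{g(\mu_j(0)/\sqrt{T})},\qquad (x,u)\in(0,\infty)\times[0,T),
\]
extended continuously by $f_j(x,T):=1/g(\mu_j(0)/\sqrt{T})$ for $x>0$, and set
\[
X_j(t) := f_j\bigl(\mu_j(t\wedge\tau_A^j),\,[\mu_j](t\wedge\tau_A^j)\bigr),\qquad \tau_A^j := \inf\{t:[\mu_j](t)=T\}.
\]
Using $g''(y)=-y g'(y)$ one verifies $\partial_u f_j + \frac12\partial_{xx}f_j=0$, so the non-stochastic It\^o formula of \cite{\CTXIII} makes $X_j$ a continuous local martingale q.a.; since $\mu_j>0$ throughout, the stopped process stays bounded by $1/g(\mu_j(0)/\sqrt{T})$, yielding a bounded (hence genuine) continuous martingale with $X_j(0)=1$.

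The payoff estimate is then immediate: on $\{\tau_A^j\le\tau_A\}$,
\[
X_j(\tau_A) = 1/g(\mu_j(0)/\sqrt{T}) \ge \sqrt{\pi T/2}/\mu_j(0),
\]
using $g(y)\le y\sqrt{2/\pi}$ for $y\ge 0$ (since $g'(y)\le g'(0)=\sqrt{2/\pi}$). Combining with $\mu_{j^*}(0)\le 1$ and $\sqrt{\pi/2}>1.25$,
\[
X(\tau_A) \ge \frac{1}{J} X_{j^*}(\tau_A) \ge \frac{1}{J}\sqrt{\pi A/(2J)} = \sqrt{\pi/2}\, J^{-3/2} A^{1/2} > 1.25\, J^{-3/2} A^{1/2} \quad \text{q.a.},
\]
which is~\eqref{eq:arbitrage-our}.

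The main obstacle I anticipate is bookkeeping rather than ideas: one must carefully justify the non-stochastic It\^o formula and the passage from a bounded local martingale to a genuine martingale near the boundary $u\uparrow T$, where $f_j$ is singular. Stopping at $\tau_A^j$ keeps $u\le T$ along the path, and the strict positivity of $\mu_j$ keeps $x$ away from $0$, so the extension to $u=T$ is continuous path by path; formalising this inside the q.a.\ framework of \cite{\CTXIII,\CTIV} should be routine, but deserves care.
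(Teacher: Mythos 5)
Your argument is correct, and its global structure coincides with the paper's: you work stock by stock with the level $T=A/J$, note that $\sum_j[\mu_j](\tau_A)=A$ forces some $[\mu_{j^*}]$ to reach $T$ by time $\tau_A$, average the $J$ stopped nonnegative processes, and end up with exactly the same constant $\sqrt{\pi/2}>1.25$ coming from the Brownian probability of not hitting $0$, bounded via $2\Phi(y)-1\le y\sqrt{2/\pi}$ and $\mu_{j^*}(0)\le1$. Where you differ is in how the per-stock process is produced. The paper does not build it by hand: it applies the non-stochastic Dubins--Schwarz theorem of \cite{\CTIV} to $\mu_j$ together with the classical bound \cite[(2.6.2)]{Karatzas/Shreve:1991} for the probability that Brownian motion started from $\mu_j(0)\le1$ avoids $0$ up to time $A/J$; the smallness of this upper probability, combined with the fact that $\mu_j$ never does hit $0$, is converted directly into a capital gain. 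You instead exhibit the space-time harmonic function $g(x/\sqrt{T-u})$ (with $g=2\Phi-1$, $\partial_u+\tfrac12\partial_{xx}=0$) and turn $f_j(\mu_j,[\mu_j])$ into a martingale via the non-stochastic It\^o formula of \cite{\CTXIII}. Your route is more explicit and avoids the Dubins--Schwarz machinery, but it carries two technical debts you partly flag: the It\^o formula of \cite{\CTXIII} must be applied to a function of the pair $(\mu_j,[\mu_j])$ (harmless, since $[\mu_j]$ has zero quadratic variation and covariation with $\mu_j$, but it needs saying), and the behaviour at the singular boundary $u\uparrow T$ must be handled by stopping, as you do. More substantively, your construction yields a nonnegative continuous martingale in the sense of \cite{\CTXIII}, on a par with the processes in \eqref{eq:arbitrage-A} and \eqref{eq:SV}, whereas the paper's construction yields a nonnegative supermartingale in the more cautious sense of \cite{\CTIV}; the discussion following the proposition presents precisely this cautiousness as the advantage of \eqref{eq:arbitrage-our}, so your proof establishes the literal inequality but forfeits that additional point.
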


\begin{proof}
  For each $j\in\{1,\ldots,J\}$,
  we will construct a nonnegative continuous martingale $X_j$ satisfying $X_j(0)=1$ and
  \begin{equation}\label{eq:arbitrage-our-j}
    X_j(\tau_j)
    \ge
    1.25
    (A/J)^{1/2}
    \quad
    \text{q.a.},
  \end{equation}
  where
  \begin{equation*}
    \tau_j
    :=
    \min
    \{
      t \st [\mu_j](t) = A/J
    \}.
  \end{equation*}
  (In this case we can set $X$ to the average of all $J$ of $X_j$ stopped at time $\tau_j$.)
  According to \cite[(2.6.2)]{Karatzas/Shreve:1991},
  the probability that a Brownian motion started from 1
  (in fact $\mu_j$ is started from $\mu_j(0)\le 1$)
  does not hit zero over the time period $A/J$ is
  \[
    1 - \sqrt{\frac{2}{\pi}} \int_{(J/A)^{1/2}}^{\infty} e^{-x^2/2} \dd x
    =
    \sqrt{\frac{2}{\pi}} \int_0^{(J/A)^{1/2}} e^{-x^2/2} \dd x
    \le
    \sqrt{\frac{2}{\pi}} (J/A)^{1/2}.
  \]
  In combination with the non-stochastic Dubins--Schwarz result \cite[Theorem~3.1]{\CTIV}
  applied to $\mu_j$,
  this gives \eqref{eq:arbitrage-our-j} with
  \[
    \sqrt{\frac{\pi}{2}}
    >
    1.25
  \]
  in place of $1.25$.
\end{proof}

The processes $Z_{\pi}$ in \eqref{eq:arbitrage-A} and $X$ in \eqref{eq:SV} are nonnegative supermartingales
in the sense of \cite{\CTXIII}
(in fact, nonnegative continuous martingales).
On the other hand, the process $X$ in \eqref{eq:arbitrage-our}
is a nonnegative supermartingale in the sense of the more cautious definitions in \cite{\CTIV}.
This can be regarded as advantage of \eqref{eq:arbitrage-our} over \eqref{eq:arbitrage-A} and \eqref{eq:SV}.
However, a disadvantage of \eqref{eq:arbitrage-our}
is that quantitatively it is much weaker than both \eqref{eq:arbitrage-A} and \eqref{eq:SV};
the right-hand side of~\eqref{eq:arbitrage-our} is always smaller than the right-hand side of~\eqref{eq:arbitrage-A},
and it is greater than the right-hand side of~\eqref{eq:SV} only for a small range of $A$
(approximately $A\in(0,0.2)$ for $J=2$).
\end{document}